\newtheorem{thm}{Theorem}
\newtheorem{lem}[thm]{Lemma}
\newtheorem{dfn}[thm]{Definition}
\title{Thermal reconstruction of chaotic quantum many-body systems}
\author[1]{Shozab Qasim}
\author[2,3]{Jason Pollack}
\affil[1]{Department of Physics, Dahlem Center for Complex Quantum Systems, Freie Universität Berlin}
\affil[2]{Department of Electrical Engineering and Computer Science, Syracuse University}
\affil[3]{Institute for Quantum \& Information Sciences, Syracuse University}
\date{}
\begin{document}

\maketitle

\begin{abstract}
Thermal states are thermal with respect to a fixed Hamiltonian. How much information about this Hamiltonian can we ``bootstrap'' from the subsystems of a thermal state? We attack the problem by positioning it as a subspecies of the quantum marginal problem. In states that obey the quantum Markov property, the Petz recovery map captures the knowledge of the larger system inherent in a subsystem. We use the conditional mutual information to check the goodness of Petz recovery, analytically in a random-matrix-theory-inspired hopping model and numerically in an Ising-like spin chain model. We observe different behavior in chaotic versus integrable phases of the model: in the chaotic phase, the reconstruction works well at both very low and very high temperatures, with some intermediate critical temperature at which reconstruction works worst, whereas in the integrable phase reconstruction breaks down at low temperatures. 
\end{abstract}
\tableofcontents 

\section{Introduction}
\begin{quote}
Can the role of the wavefunction in quantum mechanics be supplanted by reduced density matrices? - A. J. Coleman \cite{Coleman}
\end{quote}
A central difficulty in quantum many-body physics is that the underlying physics depends on the full many-body state. The quantum marginal problem \cite{marginalphysical, Klyachko_2006} asks a basic but subtle question: when are reduced density matrices on small subsystems consistent with some global quantum state? A practical use case for this problem is that instead of computing physical observables with the full density matrix, one can instead compute physical observables with the marginals. 

The grand goal is to construct a numerical algorithm where it is more efficient to compute physical quantities with the marginals instead of the full global wavefunction (for applications to quantum chemistry, see \cite{Mazziotti_2012, Mazziotti_2023}). As a concrete example, consider a lattice system consisting of $n$ sites with local Hilbert-space dimension $d$. On some subsets $S_i$ of the lattice sites, we specify reduced states $\rho_i$. We wish to know whether $\rho_i = \Tr_{\backslash S_i} \rho$ for some global $\rho$. A solution to this problem would also solve all finite-dimensional few-body ground state problems: for a two-body Hamiltonian, $H = \sum_{i,j =1}^n h_{ij}$, one need only compute 
\begin{equation}
    \min_{\rho} \Tr H \rho = \min_{\rho} \sum_{i,j} \Tr h_{ij} \rho = \min_{ \{\rho_{i,j} \text{comp.}\} } \sum_{i,j} \Tr h_{i,j} \rho_{i,j}.
\end{equation}
The far left hand side of the equation optimizes over $O(d^n)$ variables whereas the far right hand side optimizes over $O(n^2 d^4)$ variables, which is an exponential improvement. The optimization is over a convex set of compatible $\rho_{i,j}$. The computational complexity of this 2-site reduced density matrix is dominated by the complexity of deciding compatibility of $\rho_{i,j}$'s. Finding such two-body ground states in general is NP-hard and in some cases even QMA hard. So there is no efficient quantum or classical algorithm for the most general version of the two-body marginal problem. However, for physically relevant instances, approximations have been successful \cite{Mazziotti_2012, Mazziotti_2023}. The general question of the quantum marginal problem has attracted a respectable amount of attention in the quantum information community \cite{klyachko2004quantummarginalproblemrepresentations, Klyachko_2006, Lopes2015, eisert2023notelowerboundsvariational, Eisert_2008} with subclasses of solutions leveraging entanglement polytopes \cite{Walter_2013}. 

A practical example where a solution of the marginal problem is extremely simple is mean-field theory - which is known to be a good description if the entanglement is low \cite{trimborn}. More precisely, if there are local density matrices $\rho_1, \dots, \rho_m$, each supported on a set of at most $k$ over a system of $n$ qudits, the global state $\tau$ that is compatible with the marginals, i.e., the mean-field solution, is simply $\tau = \rho_1 \otimes \dots \otimes \rho_m$. Mean-field theory, however, captures a very limited entanglement structure that fails for many interesting highly-entangled phases of matter like topological phases. Alternatives to mean-field theory include tensor networks \cite{TNreview} with very successful algorithms \cite{Ba_uls_2023} for capturing properties of phases of matter in 1D like DMRG. The situation is much harder in 2D, but progress has been made due to recent breakthroughs in characterizing entanglement in 2D \cite{kim1, kim2}. 

These works focus on ground-state physics. A much more challenging problem is determining compatibility of marginals with global thermal states. This is intimately tied to the physics of thermalization, which is often stated to be the hardest problem in quantum many-body physics as it requires access to the entire energy spectrum \cite{thermalization}. Studies of thermalization from the physics perspective are thus mainly numerical in nature. Analytic statements are possible only if one resorts to using random matrices. In this paper, we are concerned with combating the marginal problem in the chaotic quantum many-body systems for which the Eigenstate Thermalization Hypothesis \cite{Srednicki:1994mfb, Deutsch:1991msp,Jensen:1984gu} is believed to hold true for operators of low complexity. Our goal is to numerically characterize the entanglement of the thermal state of a chaotic many-body system as a function of temperature. We numerically demonstrate that a Markovianity assumption is satisfied, which means that the conditional mutual information is low. We can then recover the global thermal state from the marginals using an information-theoretic measure called the Petz recovery map. Additionally, we demonstrate how the Markovianity assumption is satisfied in local random band matrices. Finally, we prove bounds on the closeness of dynamics between the recovered Petz state and the original state.  

We are motivated to investigate these questions in the hope that just as the marginal problem has been successful for understanding ground states, it may also shed some light on chaotic systems, and that marginals of thermal states of chaotic systems could be used to compute thermal expectation values and perhaps also capture dynamics. The end goal would be an algorithm that would numerically compute thermal density matrices. This is a rather lofty goal, and we only attempt to characterize the following problem: how well does the Petz map reconstruct thermal states of chaotic many-body systems?  

Previous analysis of these questions restricted to ground states are \cite{Hu_2024, vardhan2023Petzrecoverysubsystemsconformal, jia2020Petzreconstructionrandomtensor}, results on thermal states of stabilizers can be found in \cite{ temme2015faststabilizerhamiltoniansthermalize} and some work on thermal states is in \cite{Alhambra_2017}.

Spectra of chaotic quantum many-body Hamiltonians are captured by random matrices by virtue of the Wigner surmise \cite{BGS1984}. Thermalization can be studied either analytically by leveraging random matrices \cite{qasim2025emergentstatisticalmechanicsholographic, Haferkamp_2021, Harrow:2022znr, Bertoni_2025, wang2025eigenstatethermalizationhypothesisrandom, jafferis2023jtgravitymattergeneralized, jafferis2023matrixmodelseigenstatethermalization} or numerically for quantum many-body systems \cite{cáceres2024genericetheigenstatethermalization}. In this work, we make use of both methods: we compute the Petz recovery for random band matrices and compute the Petz recovery fidelity for realistic spin chains. For more background, see \cite{Abanin_2019, Gogolin_2016, Eisert_2015} and references therein. 

\subsection{Summary of results}
\begin{itemize}
    \item We provide analytical bounds on the closeness of dynamics and expectation values of observables under the Petz map. 
    \item We numerically reconstruct thermal states of a spin chain model at various temperatures. We see a key difference between integrable and chaotic systems: in the chaotic phase, the reconstruction works well at both very low and very high temperatures, with some intermediate critical temperature at which reconstruction works worst, whereas in the integrable phase reconstruction breaks down at low temperatures. 
    \item We reconstruct thermal states of random band matrices analytically in a perturbative limit.     
\end{itemize}

The remainder of this paper is organized as follows. In Section \ref{sec:prelim} we review the marginal problem, Petz recovery, random matrix theory, and quantum chaos. Section \ref{sec:results} presents our analytic and numerical results. We discuss and conclude in Section \ref{sec:discussion}. A brief appendix contains details of derivations. 

\section{Preliminaries}\label{sec:prelim}
We will explain the marginal problem more formally and explain the Petz recovery map. Finally, we will explain how random matrix theory applies in superselection sectors of spin chains. Readers familiar with these descriptions may skip to the next section. 

\subsection{The marginal problem, Markovianity and Petz recovery}

Given density operators for subsystems of a
multipartite quantum system are they compatible to one common total density operator? This is known as the quantum marginal problem:

\begin{dfn} \textbf{Marginal Problem} \cite{schilling2014quantummarginalproblem}
    For a given family $\mathcal{K}$ of subsystems $\mathcal{I}$ of $\mathcal{J}$ the quantum marginal problem $\mathcal{M}_\mathcal{K}$ is the problem of determining and
    describing the set $\sigma_\mathcal{K}$ of tuples $(\rho_\mathcal{I})_{\mathcal{I}} \in \mathcal{K}$ of compatible marginals. Compatible here means that there exists a density operator $\rho_\mathcal{J}$ for the total system such that $\forall{I} \in \mathcal{K}$
    \begin{equation}
        \rho_\mathcal{I} = \Tr_{\mathcal{J} \backslash \mathcal{I}}[\rho_\mathcal{J}]. 
    \end{equation}
\end{dfn}

A state on a tripartite quantum system $A\otimes B \otimes C$ forms a \textbf{Markov chain} if it can be reconstructed
from its marginal on $A\otimes B$ by a quantum operation from $B$ to $B\otimes C$. We show that the quantum
conditional mutual information $I(A : C|B)$ (Where the conditional mutual information is defined as $I(A:C \mid B) = S(\rho_{AB}) + S(\rho_{BC}) - S(\rho_{ABC}) - S(\rho_B)$) of an arbitrary state is an upper bound on its distance to the closest reconstructed state. It thus quantifies how well the Markov chain property is approximated. This is summarized by the following theorem:
\begin{thm} \cite{Fawzi_2015}
    For any density operator $\rho_{ABC}$ on $A\otimes B \otimes C$, where $A, B,$ and $C$ are separable Hilbert spaces, there exists a trace-preserving completely positive map $\mathcal{R}_{B\xrightarrow{} BC}$ from the space of operators on $B$ to the space of operators on $B \otimes C$ such that 
    \begin{equation}
       2^{-\frac{1}{2}I(A: C|B)_\rho} \leq F(\rho_{ABC}, (\mathcal{I}_A \otimes \mathcal{R}^{(\lambda)}_{B \xrightarrow{} BC}(\cdot))\rho_{AB}). 
    \end{equation}
    Furthermore if $A, B$ and $C$ are finite dimensional then $\mathcal{R}^{(\lambda)}_{B \xrightarrow{} BC}(\cdot)$ has the form 
\begin{equation}
    \mathcal{R}^{(\lambda)}_{B \xrightarrow{} BC}(\cdot) = \rho_{BC}^{\frac{1}{2}-\frac{i \lambda}{2}} \rho_{B}^{-\frac{1}{2}+\frac{i \lambda}{2}}(\cdot)\rho_{B}^{-\frac{1}{2}-\frac{i \lambda}{2}} \rho_{BC}^{\frac{1}{2}+\frac{i \lambda}{2}}.
\end{equation}
This most general form is called the Petz recovery map, we will study this in the case where $\lambda=0$.
\end{thm}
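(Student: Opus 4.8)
The plan is to recognize the conditional mutual information as the amount by which quantum relative entropy decreases under a partial trace, and then to invoke the strengthened data-processing (approximate-sufficiency) inequality, whose extremal recovery map is exactly the rotated Petz map. First I would use $I(A:C|B)_\rho = I(A:BC)_\rho - I(A:B)_\rho$ and rewrite each mutual information as a relative entropy to a product of marginals, giving
\[
I(A:C|B)_\rho = D(\rho_{ABC}\,\|\,\rho_A\otimes\rho_{BC}) - D(\rho_{AB}\,\|\,\rho_A\otimes\rho_B).
\]
With $\rho=\rho_{ABC}$, $\sigma=\rho_A\otimes\rho_{BC}$ and the channel $\mathcal{N}=\Tr_C$, one has $\mathcal{N}(\rho)=\rho_{AB}$ and $\mathcal{N}(\sigma)=\rho_A\otimes\rho_B$, so the conditional mutual information is precisely the relative-entropy loss $D(\rho\|\sigma)-D(\mathcal{N}(\rho)\|\mathcal{N}(\sigma))$ under $\mathcal{N}$. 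I would then compute the Petz map $\mathcal{R}_{\sigma,\mathcal{N}}$ for this pair: since $\sigma$ is a product and $\mathcal{N}$ only traces out $C$, the $\rho_A^{\pm 1/2}$ factors cancel and the map collapses onto the $BC$ system, built from $\rho_{BC}$ and $\rho_B$ alone; inserting the phases $e^{\pm i\lambda/2}$ of the rotated Petz map reproduces the stated $\mathcal{I}_A\otimes\mathcal{R}^{(\lambda)}_{B\to BC}$ acting on $\rho_{AB}$.

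The engine of the argument is then the approximate-sufficiency bound
\[
D(\rho\|\sigma)-D(\mathcal{N}(\rho)\|\mathcal{N}(\sigma)) \geq -2\log F\big(\rho,\ \mathcal{R}^{(\lambda)}_{\sigma,\mathcal{N}}(\mathcal{N}(\rho))\big)
\]
for a suitable rotation $\lambda$. Since $2^{-\frac12 I}\le F$ is equivalent to $I\ge -2\log_2 F$, establishing this inequality with the partial-trace Petz map computed above yields the theorem directly. I would prove it by complex interpolation: introduce an analytic family of operators $M_z$ on the strip $0\le\operatorname{Re}(z)\le 1$ whose boundary values reproduce, on one edge, the sandwiched R\'enyi divergence $\tilde D_\alpha$ near its $\alpha\to1$ endpoint (which recovers the relative-entropy difference through the Fr\'echet derivative of the operator logarithm) and, on the other edge $\operatorname{Re}(z)=0$, the fidelity via $\tilde D_{1/2}(\rho\|\tau)=-2\log F(\rho,\tau)$ with the recovered state $\tau=\mathcal{R}^{(\lambda)}_{\sigma,\mathcal{N}}(\mathcal{N}(\rho))$. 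The Stein--Hirschman three-line theorem then writes the relative-entropy loss as a $\beta_0(t)$-average of $-2\log F$ over imaginary shifts $t=\lambda$, with $\beta_0(t)=\tfrac{\pi}{2}(\cosh(\pi t)+1)^{-1}$ the Poisson-kernel density of the strip; because $\beta_0$ is a probability measure and $-2\log F\ge 0$, a mean-value argument selects a single $\lambda$ attaining the pointwise bound, and exponentiating concludes.

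The hard part is this interpolation step. The delicate points are: verifying analyticity and the boundary $L_p$-norm estimates so that Stein--Hirschman applies; checking that the $\alpha\to1$ limit of $\tilde D_\alpha$ genuinely differentiates into $D(\rho\|\sigma)-D(\mathcal{N}(\rho)\|\mathcal{N}(\sigma))$ despite the non-commutativity of $\rho$, $\sigma$ and their channel images (this forces the integral representation of $\log$ and is exactly where the rotated rather than bare Petz map becomes necessary); and handling rank deficiencies by first assuming $\rho,\sigma$ full rank and then passing to the general separable case by a continuity argument. Everything downstream of the inequality---the algebraic rewriting and the explicit partial-trace form of the map---is routine once the interpolation bound is in hand.
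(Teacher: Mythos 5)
The paper does not contain a proof of this statement at all: the theorem is imported verbatim from the literature (cited as \cite{Fawzi_2015}), so there is no in-paper argument to compare against, and your proposal must be judged on its own. Judged that way, it is essentially a correct reconstruction of the standard proof of the theorem \emph{in the precise form quoted here}. Your reductions are right: $I(A:C|B)_\rho = D(\rho_{ABC}\|\rho_A\otimes\rho_{BC}) - D(\rho_{AB}\|\rho_A\otimes\rho_B)$, so the CMI is exactly the relative-entropy loss under $\mathcal{N}=\Tr_C$ with reference state $\sigma=\rho_A\otimes\rho_{BC}$; and since the exponents of $\rho_A$ appearing in the rotated Petz map for this pair are exact negatives of one another, they collapse to the support projector, so the map factorizes as $\mathcal{I}_A\otimes\mathcal{R}^{(\lambda)}_{B\to BC}$ as claimed. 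The engine you invoke---Stein--Hirschman interpolation on the strip, with one edge reproducing the $\alpha\to 1$ sandwiched-R\'enyi limit (the relative-entropy difference) and the other edge giving $-2\log F$ against rotated-Petz-recovered states, averaged with the Poisson kernel $\beta_0(t)=\tfrac{\pi}{2}(\cosh(\pi t)+1)^{-1}$, followed by a mean-value selection of a single $\lambda$---is exactly the Junge--Renner--Sutter--Wilde--Winter argument, which is the proof that actually produces the rotated Petz family $\mathcal{R}^{(\lambda)}$ quoted in the theorem. Two caveats worth recording. First, attribution: the original Fawzi--Renner paper proves the fidelity bound by a genuinely different route (a state-redistribution/tensorization argument) and yields a less explicit recovery map, not the one-parameter rotated Petz form; the form with $\lambda$ is due to the later interpolation proof, so your route is the right one for this statement even though the paper's citation points to Fawzi--Renner. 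Second, your mean-value step produces a $\lambda$ that depends on $\rho_{ABC}$, which suffices for the existence claim as stated; if one instead wants a universal (state-independent) recoverer, one averages the rotated maps against $\beta_0$ and uses concavity of fidelity and of $\log$. What your sketch leaves unproven is the interpolation inequality itself (analyticity of the operator family and the boundary $L_p$ estimates), which is the real technical content---but you have correctly located where all of the difficulty lives, and everything surrounding it is sound.
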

The setup is depicted in Figure \ref{fig:Petzrecovery}.

\begin{figure*}[h]
    \centering
\begin{tikzpicture}[>=Latex, every node/.style={font=\large}]

\begin{scope}
  \node at (1.5,1.4) {$\rho_{ABC}$};
  \fill[orange!25] (0,0) rectangle (3,1);
  \draw (0,0) rectangle (3,1);
  \draw (1,0) -- (1,1);
  \draw (2,0) -- (2,1);
  \node at (0.5,0.5) {$A$};
  \node at (1.5,0.5) {$B$};
  \node at (2.5,0.5) {$C$};
\end{scope}

\draw[->,thick]
      (3.2,0.5) -- ++(1.4,0)
      node[midway,above=4pt] {$\mathrm{Tr}_{C}$};

\begin{scope}[xshift=4.8cm]
  \node at (1,1.4) {$\rho_{AB}$};
  \fill[orange!25] (0,0) rectangle (2,1);
  \draw (0,0) rectangle (2,1);
  \draw (1,0) -- (1,1);
  \node at (0.5,0.5) {$A$};
  \node at (1.5,0.5) {$B$};
\end{scope}

\draw[->,thick]
      (6.9,0.5) -- ++(1.7,0)
      node[midway,above=4pt] {$\mathcal{R}_{B\!\to\!BC}$};

\begin{scope}[xshift=8.8cm]
  \node at (1.5,1.4) {$\tilde{\rho}_{ABC}$};
  \fill[green!25] (0,0) rectangle (3,1);
  \draw (0,0) rectangle (3,1);
  \draw (1,0) -- (1,1);
  \draw (2,0) -- (2,1);
  \node at (0.5,0.5) {$A$};
  \node at (1.5,0.5) {$B$};
  \node at (2.5,0.5) {$C$};
\end{scope}

\end{tikzpicture}
    \caption{ We trace out the subsystem $C$ from a state $\rho_{ABC}$ , and then attempt to recover the original state with a channel $\mathcal{R}_{B \xrightarrow{} BC}$ that acts non-trivially only on $B$.}
    \label{fig:Petzrecovery}
\end{figure*}
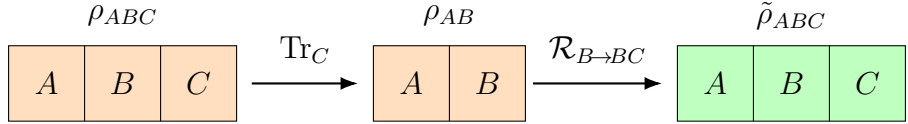 

\subsection{Random matrix theory, quantum chaos, thermalization and superselection sectors}

Random matrix theory (RMT) applies within each superselection sector of a chaotic system. This means that, for a Hilbert space split in the manner 
\begin{equation}
    \mathcal{H} = \mathcal{H}_1 \oplus \mathcal{H}_2 \oplus \dots \oplus \mathcal{H}_n,
\end{equation}
where time evolution given by the Hamiltonian keeps states within the sectors $\mathcal{H_i},$ RMT behaviour (as seen from spectral statistics) is replicated within each superselection sector within the middle of the spectrum. 

\begin{figure}[ht]
\centering
\includegraphics[width=0.45\textwidth]{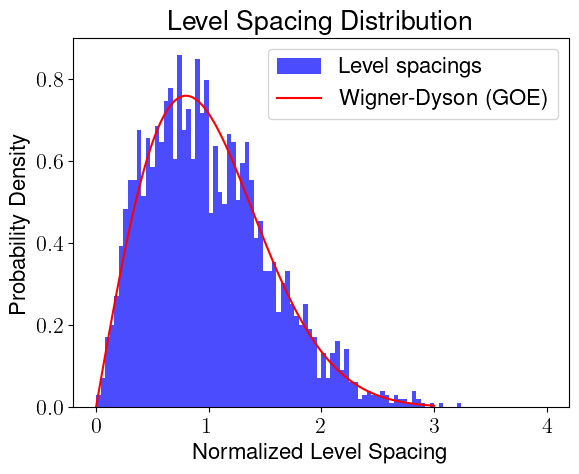}
\caption{Level spacing statistics for the Hamiltonian eq \ref{eq:ham} for the parameters $\alpha=1.0, J_x=1.05, h_z=-0.5$ within a superselection sector (parity block) of $14$ spins.}
\label{fig:WD}
\end{figure}
Thermalization is the empirical statement that energy eigenstates typically appear thermal when probed. The Eigenstate Thermalization Hypothesis (ETH) is a mathematical ansatz for the matrix elements of simple operators $\mathcal{O}$ in the eigenbasis of a chaotic Hamiltonian which captures the behavior of the energy eigenstates and displays how they are more than just random vectors. It reads 
\begin{equation}
    \bra{E_m}\mathcal{O}\ket{E_n} = \overline{O}(E)\delta_{mn} + e^{-S(\overline{E})/2}f_O(E,\omega)R_{mn},
\end{equation}
where $\overline{O}(E)$ the microcanical expectation value $f_O(E,\omega)$ a smooth spectral function of the averaged energy $E = (E_m + E_n)/2$ and the energy difference $\omega = E_m -E_n$ satisfying $f_O(E,-\omega)=f_{O}(E,\omega)$ for real Hamiltonians $R_{mn}$ a random variable with zero mean, unit variance satisfying $R_{mn}=R_{nm}$. The thermodynamic entropy (also called the microcanonical entropy) $S(E)$ is defined as the logarithm of the coarse-grained density of states i.e. the number of eigenstates of energy $E$ is given by $e^{S(E)}$.

Our model chaotic system will be an Ising-type Hamiltonian:
\begin{equation}\label{eq:ham}
H = \alpha \sum_i^{L-1} \sigma_i^z\sigma_{i+1}^z + h_z \sum_i^{L} \sigma_i^z + J_x \sum_i^L \sigma_i^x, 
\end{equation}
using the parameters $\alpha=1.0, J_x=1.05, h_z=-0.5$. For open boundary conditions, the Hamiltonian has a parity symmetry with two superselection sectors; we'll elaborate on this point in a moment. If we pick one of the superselection sectors, we find using the Quspin package \cite{10.21468/SciPostPhys.2.1.003} that the level spacing statistics follows Wigner-Dyson statistics, as shown in Figure \ref{fig:WD}. 

For an $L$-site spin chain, the parity operator $\mathcal{P}$ acts on a Pauli spin as
\begin{equation}
    \mathcal{P} \sigma_i^\alpha \mathcal{P}^\dagger = \sigma_{L+1-i}^\alpha,
\end{equation}
where $\alpha \in {X,Y,Z}$. In other words it takes site $i$ to site $L+1-i$. The wave function can accordingly decomposed into a parity-even and a parity-odd piece: $\ket{\psi}=\ket{\psi_{+1}}+\ket{\psi_{-1}}$, where $\mathcal{P}\ket{\psi_{+1}}=\ket{\psi_{+1}}$ and $\mathcal{P}\ket{\psi_{-1}}=-\ket{\psi_{-1}}$. Concretely, each computational basis state is acted on as
\begin{equation}
    \mathcal{P} \ket{s_1,s_2, \dots, s_L} = \ket{s_L,s_{L-1}, \dots, s_1}.
\end{equation}




\section{Results}\label{sec:results}
In this section we will present our results. We start by proving several model-independent bounds on how much the expectation values of observables differ between the reconstructed thermal state and the original thermal state. We then test these bounds, and thermal reconstruction in general, using numerics on a spin chain with both integrable and chaotic phases. We complement these results with a perturbative analysis of the fidelity between the thermal and reconstructed states for a simple chaotic random band matrix model, although we defer the details to the Appendix. 

\subsection{Closeness of dynamics and expectation values of observables}
To what extent will the dynamical properties of the recovered state match with those of the original state? We present a series of lemmas that bound the differences between the recovered and original states. 
\begin{lem}
    Given $ 2^{-\frac{1}{2}I(A: C|B)_\rho} \leq F(\rho_{ABC}, \widetilde{\rho}_{ABC})$, where $\widetilde{\rho}_{ABC}$ is the reconstructed state via the Petz map, $ 2^{-\frac{1}{2}I(A: C|B)_\rho} \leq F(\mathcal{N}(\rho_{ABC}), \mathcal{N}(\widetilde{\rho}_{ABC}))$ for some quantum channel $\mathcal{N}$. Additionally, $\norm{\mathcal{N}(\rho_{ABC})-\mathcal{N}(\widetilde{\rho}_{ABC})}_1 \leq \epsilon$ where $\epsilon = \norm{\rho - \widetilde{\rho}}_1 \leq \sqrt{4(1-2^{-\frac{1}{2}I(A: C|B)_\rho})}$.
\end{lem}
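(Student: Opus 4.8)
The plan is to derive all three claims from two standard monotonicity properties together with the Fuchs--van de Graaf inequalities, treating the Fawzi--Renner bound of the preceding theorem purely as a hypothesis that controls $F(\rho_{ABC},\widetilde{\rho}_{ABC})$ from below. First I would recall that the fidelity is monotone non-decreasing under any trace-preserving completely positive map, i.e.\ the data-processing inequality $F(\rho,\sigma)\leq F(\mathcal{N}(\rho),\mathcal{N}(\sigma))$, while the trace distance is monotone non-increasing, $\norm{\mathcal{N}(\rho)-\mathcal{N}(\sigma)}_1\leq\norm{\rho-\sigma}_1$. Applied to the single pair $(\rho_{ABC},\widetilde{\rho}_{ABC})$, these two facts do essentially all the work.

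For the first claim I would simply chain the hypothesis with data processing for the fidelity. Since $2^{-\frac{1}{2}I(A:C|B)_\rho}\leq F(\rho_{ABC},\widetilde{\rho}_{ABC})$ is assumed, and since $F(\rho_{ABC},\widetilde{\rho}_{ABC})\leq F(\mathcal{N}(\rho_{ABC}),\mathcal{N}(\widetilde{\rho}_{ABC}))$ for any channel $\mathcal{N}$, transitivity immediately yields $2^{-\frac{1}{2}I(A:C|B)_\rho}\leq F(\mathcal{N}(\rho_{ABC}),\mathcal{N}(\widetilde{\rho}_{ABC}))$. The physical reading is that no subsequent processing of the recovered state can make it easier to tell apart from the identically processed original, so the fidelity guarantee survives any later dynamics or measurement channel.

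For the second and third claims I would convert the fidelity bound into a trace-distance bound. The Fuchs--van de Graaf inequality gives $\tfrac{1}{2}\norm{\rho-\sigma}_1\leq\sqrt{1-F(\rho,\sigma)}$ in the squared-fidelity convention, so setting $\epsilon:=\norm{\rho_{ABC}-\widetilde{\rho}_{ABC}}_1$ and using the hypothesis I obtain $\epsilon\leq 2\sqrt{1-F(\rho_{ABC},\widetilde{\rho}_{ABC})}\leq 2\sqrt{1-2^{-\frac{1}{2}I(A:C|B)_\rho}}=\sqrt{4(1-2^{-\frac{1}{2}I(A:C|B)_\rho})}$, which is precisely the stated bound on $\epsilon$. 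The remaining inequality $\norm{\mathcal{N}(\rho_{ABC})-\mathcal{N}(\widetilde{\rho}_{ABC})}_1\leq\epsilon$ is then nothing more than data processing for the trace distance applied to the same pair of states.

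I do not anticipate a genuine obstacle, since every ingredient is a textbook inequality; the only point needing care is bookkeeping of the fidelity convention. The exact constant $4$ (rather than $8$) under the square root depends on reading $F$ as the squared fidelity, for which $\tfrac{1}{2}\norm{\rho-\sigma}_1\leq\sqrt{1-F}$; had I instead used the root fidelity $F_{\mathrm{root}}$ with $\tfrac{1}{2}\norm{\cdot}_1\leq\sqrt{1-F_{\mathrm{root}}^2}$ and bounded $1+F_{\mathrm{root}}\leq 2$, I would land on the weaker constant $8$. I would therefore fix the convention to match the Fawzi--Renner statement quoted above before reporting the constant, and note in passing that all of these bounds are monotone in $I(A:C|B)_\rho$, so that small conditional mutual information---the Markovianity condition the paper verifies numerically---directly certifies closeness of both static expectation values and channel outputs.
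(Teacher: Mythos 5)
Your proof is correct and follows essentially the same route as the paper's: the first claim via the data-processing inequality for the fidelity, and the remaining claims via Fuchs--van de Graaf together with monotonicity of the trace distance under channels (which the paper's proof leaves implicit but you state explicitly as a separate ingredient). The one place where you go beyond the paper is the fidelity-convention bookkeeping, and it is worth keeping: the paper quotes Fuchs--van de Graaf in the root-fidelity form $F(\rho,\sigma)\le\sqrt{1-\tfrac14\|\rho-\sigma\|_1^2}$, from which the hypothesis $F\ge 2^{-\frac12 I(A:C|B)_\rho}$ yields only $\epsilon\le\sqrt{4(1-2^{-I(A:C|B)_\rho})}$; the constant as stated in the lemma, $\sqrt{4(1-2^{-\frac12 I(A:C|B)_\rho})}$, follows---exactly as you observe---only when $F$ is read as the squared fidelity, for which $\tfrac12\|\rho-\sigma\|_1\le\sqrt{1-F}$. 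Since the Fawzi--Renner theorem as cited is stated for the root fidelity, the paper's own substitution (effectively plugging $F\ge 2^{-\frac12 I}$ into $\sqrt{4(1-F)}$ rather than $\sqrt{4(1-F^2)}$) is a slip that your care catches: either the bound should be relaxed to $\sqrt{4(1-2^{-I(A:C|B)_\rho})}$ (or the constant-$8$ variant you mention), or the fidelity convention should be fixed to the squared one consistently throughout the statement and hypothesis.
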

\begin{proof}
    This is due to a straightforward application of the data processing inequality where $F(\Lambda(\rho),\Lambda(\sigma)) \geq F(\rho,\sigma)$ for CPTP maps $\Lambda$ and density matrices $\rho$ and $\sigma$. The only thing left to do is bound $\epsilon$ for which we use Fuchs-van de Graff. $1- \frac{1}{2}\norm{\rho-\sigma}_1 \leq F(\rho,\sigma) \leq \sqrt{1-\frac{1}{4}\norm{\rho - \sigma}_1^2}$ - we will use the right hand side of the inequality from which it follows that $\epsilon \leq \sqrt{4(1-2^{-\frac{1}{2}I(A: C|B)_\rho})}$.
\end{proof}
The physical content of this lemma is that evolution of the density matrix $\rho$ and the Petz reconstructed density matrix $\widetilde{\rho}$ under a fixed Hamiltonian will be suppressed exponentially in the CMI of the density matrix $\rho$. 

\begin{lem}
    Given $ 2^{-\frac{1}{2}I(A: C|B)_\rho} \leq F(\rho_{ABC}, \widetilde{\rho}_{ABC})$, where $\widetilde{\rho}_{ABC}$ is the reconstructed state via the Petz map, and two quantum channels $\mathcal{N}_1, \mathcal{N}_2: \mathcal{B}(\mathcal{H}) \xrightarrow{} \mathcal{B}(\mathcal{K})$ with a diamond norm distance $\delta= \norm{\mathcal{N}_1 - \mathcal{N}_2}_\diamond$, then $\norm{\mathcal{N}_1(\rho) - \mathcal{N}_2(\widetilde{\rho})}_1 \leq \epsilon + \delta$ where $\epsilon = \norm{\rho - \widetilde{\rho}}_1 \leq \sqrt{4(1-2^{-\frac{1}{2}I(A: C|B)_\rho})}$.
\end{lem}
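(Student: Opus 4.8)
The plan is to control $\norm{\mathcal{N}_1(\rho) - \mathcal{N}_2(\widetilde{\rho})}_1$ with a single application of the triangle inequality, inserting the hybrid term $\mathcal{N}_1(\widetilde{\rho})$ so as to separate the contribution of the state mismatch from that of the channel mismatch. First I would write
\begin{equation}
\norm{\mathcal{N}_1(\rho) - \mathcal{N}_2(\widetilde{\rho})}_1 \leq \norm{\mathcal{N}_1(\rho) - \mathcal{N}_1(\widetilde{\rho})}_1 + \norm{\mathcal{N}_1(\widetilde{\rho}) - \mathcal{N}_2(\widetilde{\rho})}_1.
\end{equation}

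For the first term I would reuse the argument of the previous lemma: since $\mathcal{N}_1$ is a CPTP map it contracts the trace norm by the data processing inequality, so $\norm{\mathcal{N}_1(\rho) - \mathcal{N}_1(\widetilde{\rho})}_1 \leq \norm{\rho - \widetilde{\rho}}_1 = \epsilon$. The bound $\epsilon \leq \sqrt{4(1 - 2^{-\frac{1}{2}I(A:C|B)_\rho})}$ then follows from the Fuchs--van de Graaf inequality together with the hypothesis $2^{-\frac{1}{2}I(A:C|B)_\rho} \leq F(\rho_{ABC}, \widetilde{\rho}_{ABC})$, exactly as established earlier.

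For the second term I would invoke the definition of the diamond norm. Because the diamond norm is the supremum of the (stabilized) trace-norm output distance over all admissible inputs, it dominates the trace-norm distance evaluated on the single fixed input $\widetilde{\rho}$ (taking the ancilla trivial), giving $\norm{\mathcal{N}_1(\widetilde{\rho}) - \mathcal{N}_2(\widetilde{\rho})}_1 \leq \norm{\mathcal{N}_1 - \mathcal{N}_2}_\diamond = \delta$. Summing the two estimates yields the claimed bound $\epsilon + \delta$.

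I do not expect a genuine obstacle in this argument, which is essentially a two-line triangle-inequality estimate: the work is entirely in recognizing that the two error sources factor cleanly and that each is controlled by a standard monotonicity fact. The only point requiring care is in the second term, where one must use the diamond norm as an \emph{upper} bound on the unstabilized trace-norm distance (and not conversely), so that specializing the supremum to the particular state $\widetilde{\rho}$ is legitimate.
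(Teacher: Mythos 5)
Your proof is correct and takes essentially the same approach as the paper's: inserting the hybrid term $\mathcal{N}_1(\widetilde{\rho})$ and applying the triangle inequality is algebraically identical to the paper's linearity decomposition $\mathcal{N}_1(\rho)-\mathcal{N}_2(\widetilde{\rho}) = \mathcal{N}_1(\rho-\widetilde{\rho}) + (\mathcal{N}_1-\mathcal{N}_2)(\widetilde{\rho})$, followed by the same two estimates (trace-norm contractivity of a CPTP map for the state-mismatch term, and the diamond norm dominating the trace-norm distance on the fixed input $\widetilde{\rho}$ for the channel-mismatch term) and the same Fuchs--van de Graaf bound on $\epsilon$. If anything, you are slightly more explicit than the paper, which leaves the contractivity step $\norm{\mathcal{N}_1(\rho-\widetilde{\rho})}_1 \leq \norm{\rho-\widetilde{\rho}}_1$ implicit.
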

\begin{proof}
    Recall $\norm{\mathcal{N}}_\diamond = \sup_{\norm{\psi}_1\leq 1} \norm{(\mathcal{N} \otimes I)(\psi)}_1$. Then $\mathcal{N}_1(\rho) - \mathcal{N}_2(\sigma) = \mathcal{N}_1(\rho-\sigma) + (\mathcal{N}_1-\mathcal{N}_2)(\sigma)$ due to linearity. Following that, $\norm{\mathcal{N}_1(\rho)-\mathcal{N}_2(\sigma)}_1 \leq \norm{\mathcal{N}_1(\rho - \sigma)}_1 + \norm{(\mathcal{N}_1 - \mathcal{N}_2)(\sigma)}_1$ from which we obtain $\norm{\mathcal{N}_1(\rho)-\mathcal{N}_2(\widetilde{\rho})}_1 \leq \epsilon + \delta$. The only thing left to do is bound $\epsilon$ for which we use Fuchs-van de Graff: $1- \frac{1}{2}\norm{\rho-\sigma}_1 \leq F(\rho,\sigma) \leq \sqrt{1-\frac{1}{4}\norm{\rho - \sigma}_1^2}$ - we will use the right hand side of the inequality from which it follows that $\epsilon \leq \sqrt{4(1-2^{-\frac{1}{2}I(A: C|B)_\rho})}$.
\end{proof}
The physical content of this lemma is that evolution of the density matrix $\rho$ and the Petz reconstructed density matrix $\widetilde{\rho}$ under a two different Hamiltonians will differ in terms of terms that depend on the CMI of the original state and the norm difference between the two Hamiltonians. 

We would like to compare the difference in measurement outcomes for the state $\rho$ and the Petz reconstructed state $\widetilde{\rho}$. A simple way to do this is to bound the trace norm difference, note that the trace norm of a matrix is the sum of the singular values which are the eigenvalues of the product of the matrix and its complex conjugate. As a precursor, consider how the elements would deviate in the ETH between two energy eigenbasis $\{\ket{E_i}\}$ and $\{\ket{E_i'}\}$:
\begin{align}
    &\bra{E_i} \mathcal{O} \ket{E_i} - \bra{E_i'} \mathcal{O} \ket{E_i'} = [\overline{O}(E)-\overline{O}(E')]\delta_{mn}\nonumber\\
    &+ [e^{-S(\overline{E})}f_{O}(E,\omega) - e^{-S(\overline{E'})}f_{O}(E',\omega')]R_{ij}.
\end{align}
We attempt to quantify this difference in terms of difference in expectation values of operators in with the reconstructed state and the original state using the following lemmas:
\begin{lem}
    Given $ 2^{-\frac{1}{2}I(A: C|B)_\rho} \leq F(\rho_{ABC}, \widetilde{\rho}_{ABC})$ where $\widetilde{\rho}_{ABC}$ is the reconstructed state via the Petz map, $\norm{\mathcal{O}(\rho - \widetilde{\rho})}_1 \leq \norm{\mathcal{O}}_\infty \sqrt{4(1-2^{-\frac{1}{2}I(A: C|B)_\rho})}$ where
    $\epsilon = \norm{\rho - \widetilde{\rho}}_1 \leq \sqrt{4(1-2^{-\frac{1}{2}I(A: C|B)_\rho})}$.
\end{lem}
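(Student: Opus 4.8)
The plan is to combine the matrix Hölder inequality with the trace-distance bound already derived via Fuchs--van de Graaff in the preceding two lemmas. Since the statement is an inequality for the Schatten $1$-norm of the product $\mathcal{O}(\rho-\widetilde{\rho})$, the natural first move is to peel off the operator $\mathcal{O}$ using the duality between the operator norm $\norm{\mathcal{O}}_\infty$ (the largest singular value) and the trace norm $\norm{\cdot}_1$.

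First I would invoke the matrix Hölder inequality with conjugate Schatten exponents $(\infty,1)$, namely $\norm{XY}_1 \leq \norm{X}_\infty \norm{Y}_1$, applied with $X=\mathcal{O}$ and $Y=\rho-\widetilde{\rho}$. This immediately yields
\begin{equation}
\norm{\mathcal{O}(\rho-\widetilde{\rho})}_1 \leq \norm{\mathcal{O}}_\infty \norm{\rho-\widetilde{\rho}}_1 = \norm{\mathcal{O}}_\infty\,\epsilon .
\end{equation}
Second, I would import the bound on $\epsilon$ established exactly as in the previous lemmas: starting from the hypothesis $2^{-\frac{1}{2}I(A:C|B)_\rho}\leq F(\rho_{ABC},\widetilde{\rho}_{ABC})$ and applying the upper Fuchs--van de Graaff inequality $F(\rho,\sigma)\leq\sqrt{1-\tfrac{1}{4}\norm{\rho-\sigma}_1^2}$, one rearranges to obtain $\epsilon=\norm{\rho-\widetilde{\rho}}_1\leq\sqrt{4(1-2^{-\frac{1}{2}I(A:C|B)_\rho})}$. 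Substituting this into the Hölder estimate gives the claimed inequality.

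The argument has essentially no hard step: it is a one-line Hölder bound followed by the same fidelity-to-trace-distance conversion used above, so I do not anticipate any genuine obstacle. The only point requiring care is interpretive rather than technical. Because $\abs{\Tr[\mathcal{O}(\rho-\widetilde{\rho})]}\leq\norm{\mathcal{O}(\rho-\widetilde{\rho})}_1$, this bound controls the difference of expectation values $\abs{\langle\mathcal{O}\rangle_\rho-\langle\mathcal{O}\rangle_{\widetilde{\rho}}}$ for any bounded observable $\mathcal{O}$, so every low-complexity measurement on the reconstructed state agrees with the true thermal value up to an error governed by the conditional mutual information. I would therefore be explicit that $\norm{\mathcal{O}}_\infty$ denotes the spectral (operator) norm, so that the Hölder pairing with $\norm{\cdot}_1$ is the correct dual and the estimate is tight in the appropriate sense.
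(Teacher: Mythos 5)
Your proof takes essentially the same route as the paper's: Hölder's inequality with the conjugate Schatten pair $(\infty,1)$ to peel off $\norm{\mathcal{O}}_\infty$, followed by the same Fuchs--van de Graaf conversion of the fidelity lower bound into the trace-norm bound on $\epsilon$ used in the preceding lemmas. Your added remark that this controls $\abs{\Tr[\mathcal{O}(\rho-\widetilde{\rho})]}$, and hence differences of expectation values, is a correct and worthwhile clarification of the lemma's physical content.

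One caveat, which you inherit from the lemma statement itself (and which is present in the paper's own proofs as well): the rearrangement in your second step is not quite right. Squaring $2^{-\frac{1}{2}I(A:C|B)_\rho} \leq F(\rho,\widetilde{\rho}) \leq \sqrt{1-\tfrac{1}{4}\epsilon^2}$ gives $2^{-I(A:C|B)_\rho} \leq 1-\tfrac{1}{4}\epsilon^2$, hence $\epsilon \leq \sqrt{4\bigl(1-2^{-I(A:C|B)_\rho}\bigr)}$; the exponent inside the parenthesis is $-I$, not $-I/2$. Since $2^{-I/2} \geq 2^{-I}$ for $I \geq 0$, the bound as written with $-I/2$ is strictly \emph{stronger} than what Fuchs--van de Graaf delivers, so it does not follow from this argument. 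The corrected conclusion, $\norm{\mathcal{O}(\rho-\widetilde{\rho})}_1 \leq \norm{\mathcal{O}}_\infty\sqrt{4\bigl(1-2^{-I(A:C|B)_\rho}\bigr)}$, does follow and preserves the qualitative message (suppression of observable differences by the conditional mutual information); if you want the sharper $-I/2$ form you would need a separate argument.
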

\begin{proof}
    This results from a straightforward application of Holders inequality by which $\norm{\mathcal{O}(\rho - \widetilde{\rho})}_1 \leq \norm{\mathcal{O}}_p \norm{\rho - \widetilde{\rho}}_q$ where $\frac{1}{p}+\frac{1}{q}=1$. From this its clear that the upper bound is $\epsilon \norm{\mathcal{O}}_\infty $.
\end{proof}

\begin{lem}
Given thermal states $\rho=e^{-\beta H}$ and the Petz reconstructed state $\widetilde{\rho}=e^{-\beta \widetilde{H}}$ which satisfy $ 2^{-\frac{1}{2}I(A: C|B)_\rho} \leq F(\rho, \widetilde{\rho})$, a high-temperature expansion (small values of $\beta$) yields a trace norm difference between the Hamiltonians given by $\norm{\widetilde{H}-H}_1\leq \frac{\delta}{\beta}$ where $\delta=\sqrt{1-\frac{1}{4}\norm{\rho-\widetilde{\rho}}_1^2}$.
\end{lem}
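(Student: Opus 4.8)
The plan is to treat the statement as a perturbative identity in the inverse temperature $\beta$, inverting the map $H \mapsto e^{-\beta H}$ to leading order. First I would Taylor expand both matrix exponentials about $\beta=0$, writing $\rho = e^{-\beta H} = \mathbb{I} - \beta H + O(\beta^2)$ and $\widetilde{\rho} = e^{-\beta \widetilde{H}} = \mathbb{I} - \beta \widetilde{H} + O(\beta^2)$. Subtracting, the zeroth-order identity terms cancel and I obtain
\begin{equation}
    \rho - \widetilde{\rho} = \beta(\widetilde{H} - H) + O(\beta^2),
\end{equation}
so that to leading order in the high-temperature regime $\widetilde{H} - H = \tfrac{1}{\beta}(\rho - \widetilde{\rho}) + O(\beta)$.

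The second step is to pass to trace norms. Applying $\norm{\cdot}_1$ to the displayed identity and using the triangle inequality to absorb the correction gives $\norm{\widetilde{H} - H}_1 \le \tfrac{1}{\beta}\norm{\rho - \widetilde{\rho}}_1 + O(\beta)$; the leading behavior is thus controlled entirely by the trace distance between the two thermal states divided by $\beta$. The final step is to re-express $\norm{\rho - \widetilde{\rho}}_1$ through $\delta$. By construction $\delta = \sqrt{1 - \tfrac14\norm{\rho - \widetilde{\rho}}_1^2}$ is precisely the Fuchs--van de Graaf upper bound on the fidelity, $F(\rho,\widetilde{\rho}) \le \delta$, which dovetails with the hypothesis $2^{-\frac12 I(A:C|B)_\rho} \le F(\rho,\widetilde{\rho})$. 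In the regime of good reconstruction, where the trace distance is small ($\norm{\rho - \widetilde{\rho}}_1 \le 2/\sqrt{5}$), one checks by squaring that $\norm{\rho - \widetilde{\rho}}_1 \le \delta$, and substituting into the previous bound yields $\norm{\widetilde{H} - H}_1 \le \delta/\beta$.

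I expect the main obstacle to be rigor rather than mechanics: the inversion $\widetilde{H} - H = \tfrac1\beta(\rho-\widetilde{\rho}) + O(\beta)$ is an asymptotic small-$\beta$ statement, so the inequality should be read at leading order, and the cleanest way to control the $O(\beta^2)$ remainder uniformly is through the Duhamel identity
\begin{equation}
    \rho - \widetilde{\rho} = \int_0^\beta e^{-(\beta-s)H}(\widetilde{H} - H)\,e^{-s\widetilde{H}}\,ds,
\end{equation}
from which the correction terms are bounded by submultiplicativity of the trace norm.

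Two further subtleties deserve care. First, physical thermal states should be normalized by their partition functions; doing so shifts the leading term by a traceful piece proportional to $\mathbb{I}$ and inserts a factor of the Hilbert-space dimension, so the clean form $\delta/\beta$ corresponds to the unnormalized convention $\rho = e^{-\beta H}$ written in the statement. Second, the step $\norm{\rho-\widetilde{\rho}}_1 \le \delta$ is not automatic for arbitrary states but requires the reconstruction to be good enough that the trace distance stays below $2/\sqrt{5}$ — which is exactly the small-CMI, high-fidelity regime in which the lemma is physically meaningful, so I would state this regime of validity explicitly rather than claim the bound for all $\beta$ and all states.
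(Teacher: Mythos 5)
Your proof follows essentially the same route as the paper's: a first-order Taylor expansion of both matrix exponentials in $\beta$, passage to the trace norm, and substitution of $\delta$. You are in fact more careful than the paper, which simply asserts that the fidelity hypothesis ``reduces to'' $\norm{\rho-\widetilde{\rho}}_1 \leq \delta$ without noting, as you do, that this step requires the small-trace-distance regime $\norm{\rho-\widetilde{\rho}}_1 \leq 2/\sqrt{5}$, and which does not address the $O(\beta^2)$ remainder (your Duhamel identity) or the normalization convention.
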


\begin{proof}
We start with $2^{-I} \leq F(e^{-\beta H}, e^{-\beta \widetilde{H}})$ which reduces to $\norm{e^{-\beta H} - e^{-\beta \widetilde{H}}}_1 \leq \delta$. Performing a first-order Taylor expansion, we obtain $\norm{(1-\beta H)-(1-\beta \widetilde{H})} \leq \delta$ which simplifies to $\norm{\widetilde{H} - H }_1 \leq \frac{\delta}{\beta}$ and is a bound valid for small $\beta$.
\end{proof}
 
\subsection{Numerics}
Statements about thermalization can be proven analytically in systems involving randomness \cite{qasim2025emergentstatisticalmechanicsholographic, Haferkamp_2021} , an example of which we will investigate in Subsection \ref{thermalrecons}. However, an analysis of thermalization in physical systems requires numerical investigation. We thus resort to the usage of exact diagonalization methods in order to investigate the quality of Petz recovery for thermal states of physical systems.

\begin{figure*}[t!]
    \centering
    \begin{subfigure}[b]{0.4\linewidth}
        \includegraphics[width=\linewidth]{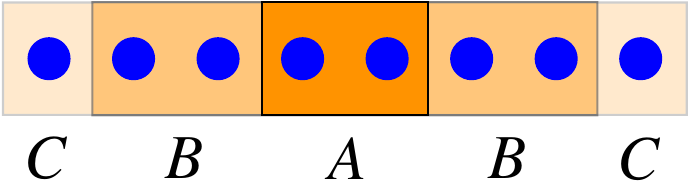}
        \caption{Permuted configuration.}
    \end{subfigure}
    \hfill
    \begin{subfigure}[b]{0.4\linewidth}
        \includegraphics[width=\linewidth]{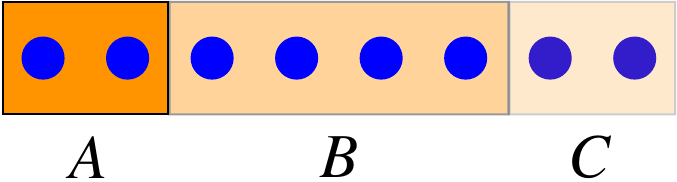}
        \caption{Non-permuted configuration.}
    \end{subfigure}
    \caption{Comparison between permuted and non-permuted configurations.}
    \label{fig:configs}
\end{figure*}

\begin{figure*}[h!]
\centering
\begin{subfigure}[t]{0.45\textwidth}
    \centering
    \includegraphics[width=\textwidth]{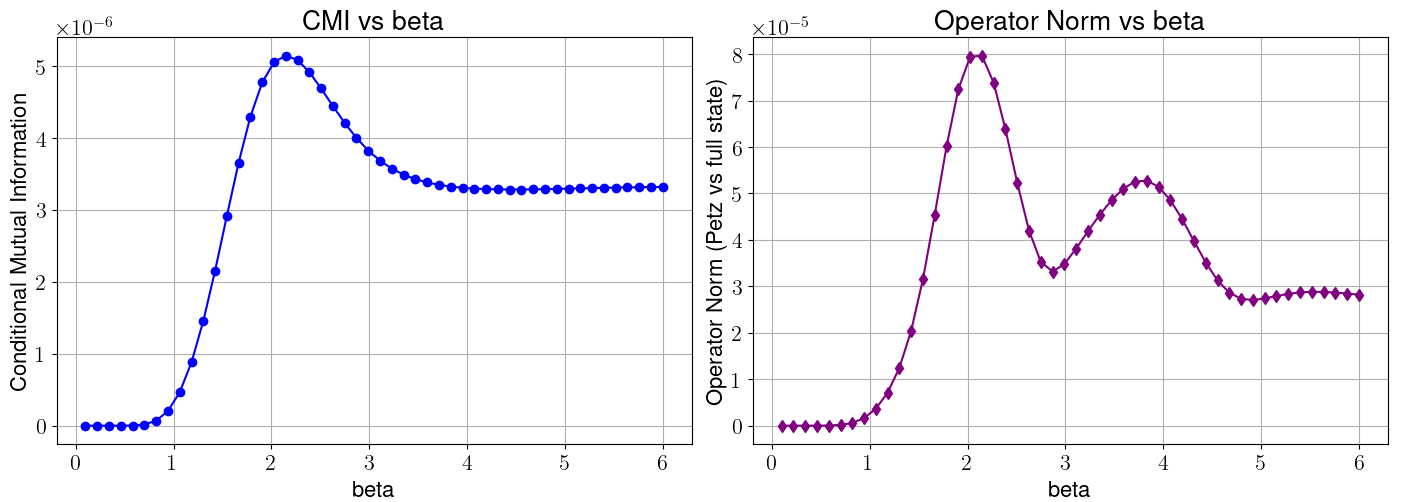}
    \caption{Reconstruction for thermal states where the sites are not permuted in the chaotic phase.}
    \label{fig:nonpermchaos}
\end{subfigure} %
\begin{subfigure}[t]{0.45\textwidth}
    \centering
    \includegraphics[width=\textwidth]{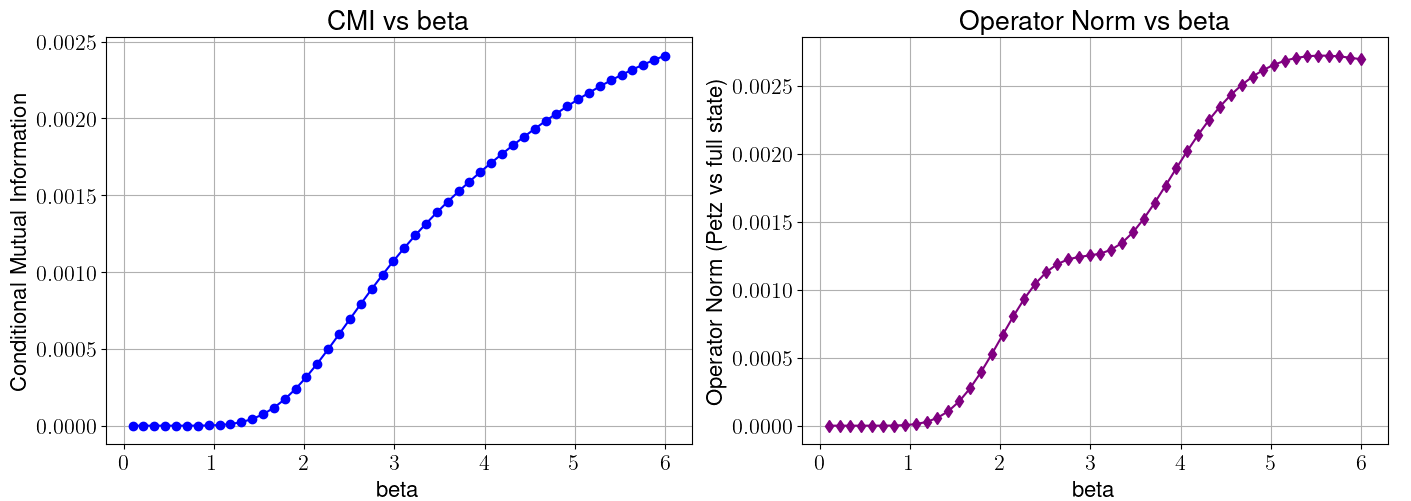}
    \caption{Reconstruction for thermal states where the sites are not permuted in the integrable phase.}
    \label{fig:nonperminteg}
\end{subfigure}
\begin{subfigure}[t]{0.45\textwidth}
    \centering
    \includegraphics[width=\textwidth]{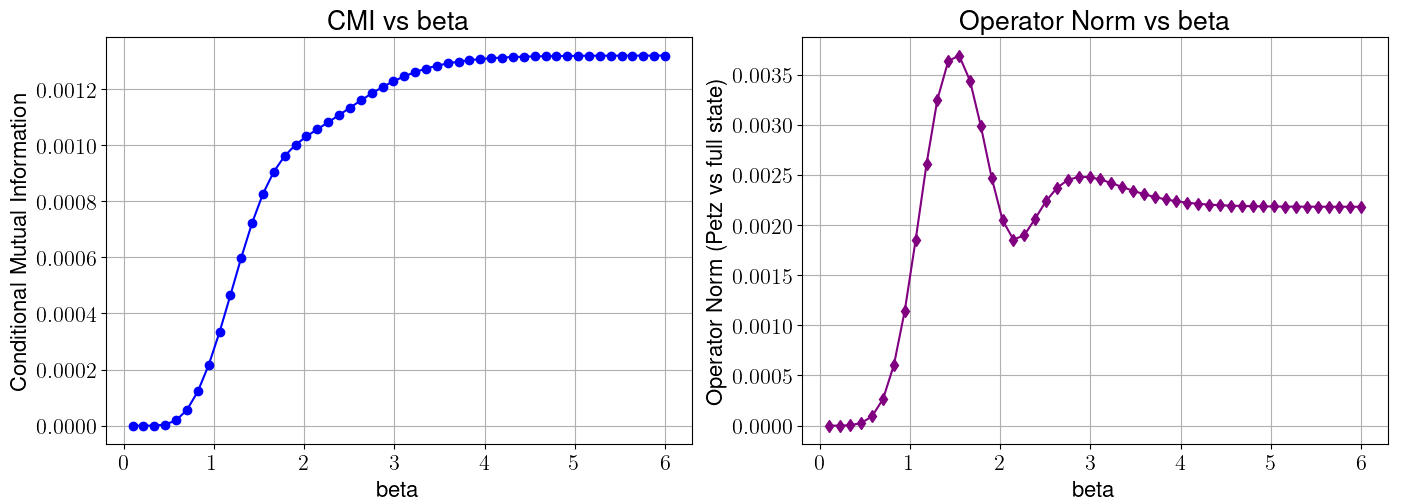}
    \caption{Reconstruction for thermal states where the sites are permuted in the chaotic phase.}
    \label{fig:permchaos}
\end{subfigure} %
\begin{subfigure}[t]{0.45\textwidth}
    \centering
    \includegraphics[width=\textwidth]{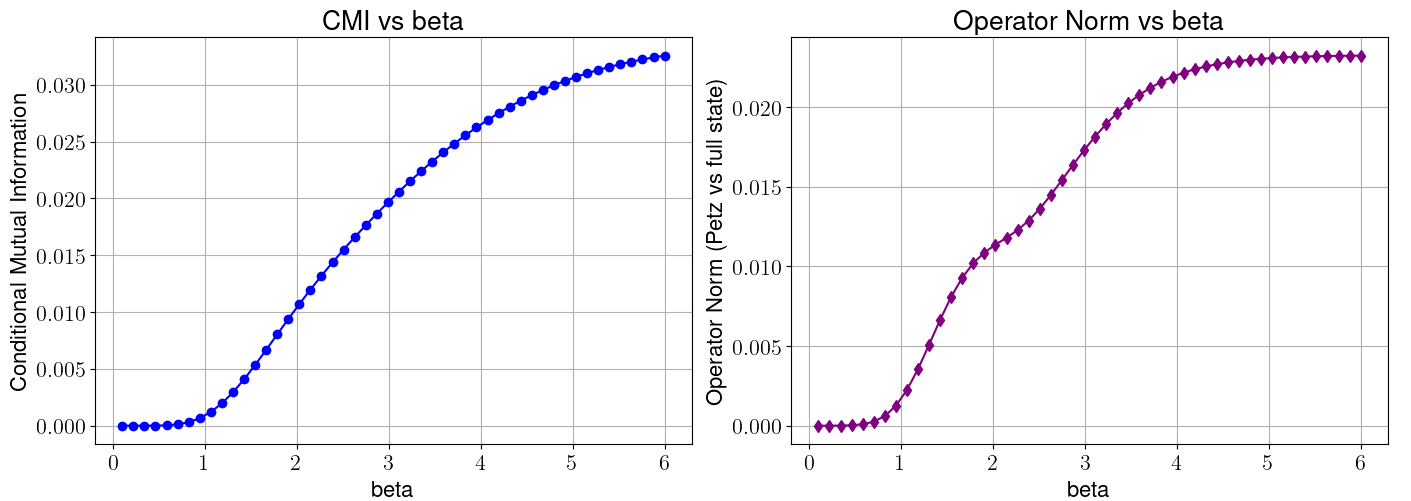}
    \caption{Reconstruction for thermal states where the sites are permuted in the integrable phase.}
    \label{fig:perminteg}
\end{subfigure}%
\caption{The CMI and Petz recovery operator norm in the integrable and chaotic phases}
\label{fig:2dplots}
\end{figure*}

The model that we study is given in Eq \eqref{eq:ham}. We begin by constructing the Gibbs state of the Hamiltonian 
\begin{equation}
    \rho_{\text{Gibbs}} = \frac{e^{- \beta H}}{Z}, \quad Z=\Tr(e^{-\beta H}).
\end{equation}
Note that the observed Wigner-Dyson statistics in Figure \ref{fig:WD} are observed within a charge sector for a specific parameter choice. When constructing the Gibbs state, we do so by constructing the thermal state in the full spin basis, and not the basis within a single charge sector. We will consider the Gibbs state for the Hamiltonian at different parameters, those corresponding to the chaotic phase and then we tune $h_z$ in the Hamiltonian to move towards the integrable phase. Furthermore, we do numerics on a spin chain with $8$ spins as Petz recovery becomes more numerically demanding otherwise and cannot be efficiently done on a laptop computer. However, we demonstrate in Figure \ref{fig:CMIinv} that for larger system sizes the universal features remain. 

\begin{figure}
    \centering
    \includegraphics[width=\linewidth]{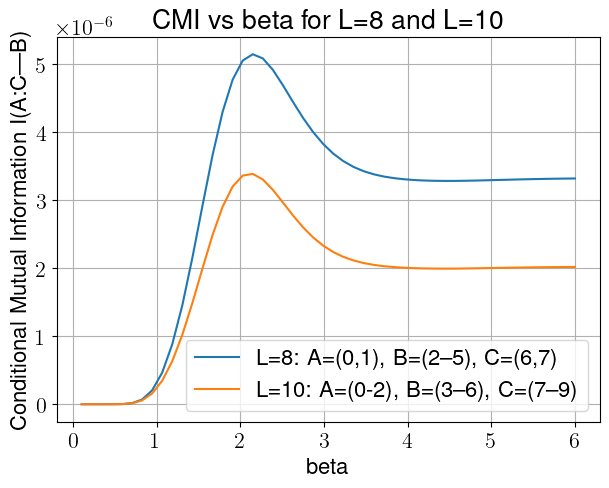}
    \caption{Invariance of CMI features for larger system sizes. }
    \label{fig:CMIinv}
\end{figure}

We divide the system into subsystems $A,B$ and $C$ in two different ways. In the first case we do not permute the indices and construct the thermal state where we identify the subsystems as $A = (0,1), B = (2,3,4,5), C = (6,7)$. In the second we permute them by applying a series of SWAP operators in the spin basis and then construct the Gibbs state. For notational convenience, we will describe this circuit of SWAP operators as the PERMUTE circuit. The particular instance of the PERMUTE circuit that we make use of on the $8$ site spin chain is as: 
\begin{equation}
\text{PERMUTE}: \quad [0,1,2,3,4,5,6,7] \xrightarrow{} [3,4,1,2,5,6,0,7]    
\end{equation} 
so that we can identify the subsystems as $A = (3,4), B = (1,2,5,6), C = (0,7)$ in the original lattice.

\begin{figure}[ht!]
    \centering
    \includegraphics[width=\linewidth]{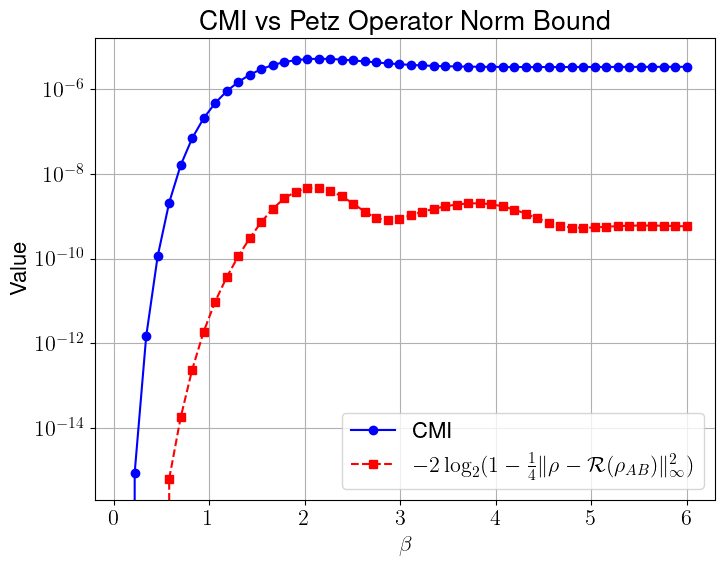}
    \caption{Numerical check of Equation \eqref{eq:figbound}.}
    \label{fig:bound}
\end{figure}

We visually depict the two configurations in Figure \ref{fig:configs}. For these two cases, after diagonalizing the Hamiltonian, constructing the eigenstates, choosing to apply PERMUTE on the eigenstates or not, we construct the Gibbs state which we will define as $\rho_{\text{Gibbs}}^{\text{non-permuted}}$ and $\rho_{\text{Gibbs}}^{\text{permuted}}$. The results are shown in Figure \ref{fig:2dplots}. We first computed the conditional mutual information for the two Gibbs state in both the integrable and chaotic phase of the Hamiltonian. We comment on a number of interesting features. It is important to remember that $\beta$ is inverse temperature.
From a physical perspective, low $\beta$ corresponds to a high temperature state while infinite $\beta$ corresponds to the ground state. The CMI is $0$ at $\beta=0$ because that's when the thermal state is essentially a product state. We will see that there is always a critical temperature after which we start to observe non-zero CMI.

A comment on why we used the operator norm instead of the fidelity - due to numerical difficulties, the operator norm was numerically more well behaved. By norm monotonicity $\|O\|_\infty \le \|O\|_1$ and the Fuchs--van de Graaf inequality
$D(\rho,\sigma)=\tfrac12\|\rho-\sigma\|_1 \le \sqrt{1-F(\rho,\sigma)}$, hence
$F(\rho,\sigma) \le 1 - \tfrac14\|\rho-\sigma\|_\infty^2$.
Combining with the bound
$F(\rho_{ABC},\widetilde{\rho}_{ABC}) \ge 2^{-I(A:C|B)/2}$ yields
\begin{equation}
I(A:C|B) \ge -2\log_2\!\Big(1-\tfrac14\|\rho-\mathcal{R}(\rho_{AB})\|_\infty^2\Big).\label{eq:figbound}
\end{equation}

We verify this bound numerically in the context of our computed quantities in Figure \ref{fig:bound}. We see the bound works well. The bound is not tight for higher $\beta$ - both quantities go to zero at infinite temperature. However, they display similar universal qualitative behavior.

Let's first consider the integrable case. In both the permuted and the non-permuted cases, the CMI is zero up until a critical temperature, after which it starts to increase and then plateaus. This can correspondingly be seen in the Petz recovery closeness between the recovered state and the original state. 

\begin{figure*}[h!]
\centering
\begin{subfigure}[b]{0.45\textwidth}
    \centering
    \includegraphics[width=\textwidth]{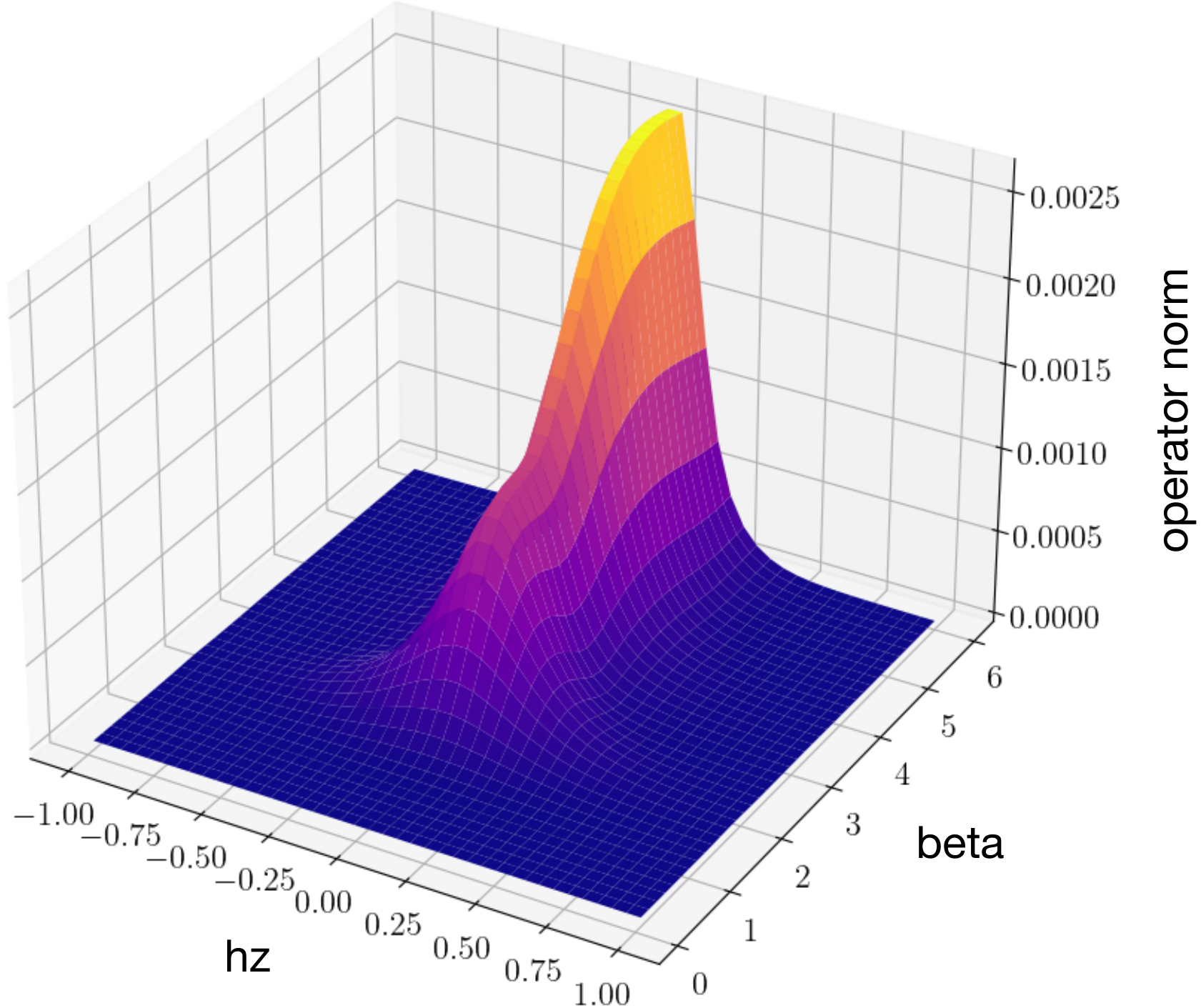}
    \caption{Operator norm difference, non-permuted}
    \label{fig:nonpermcpnom}
\end{subfigure} %
\begin{subfigure}[b]{0.45\textwidth}
    \centering
    \includegraphics[width=\textwidth]{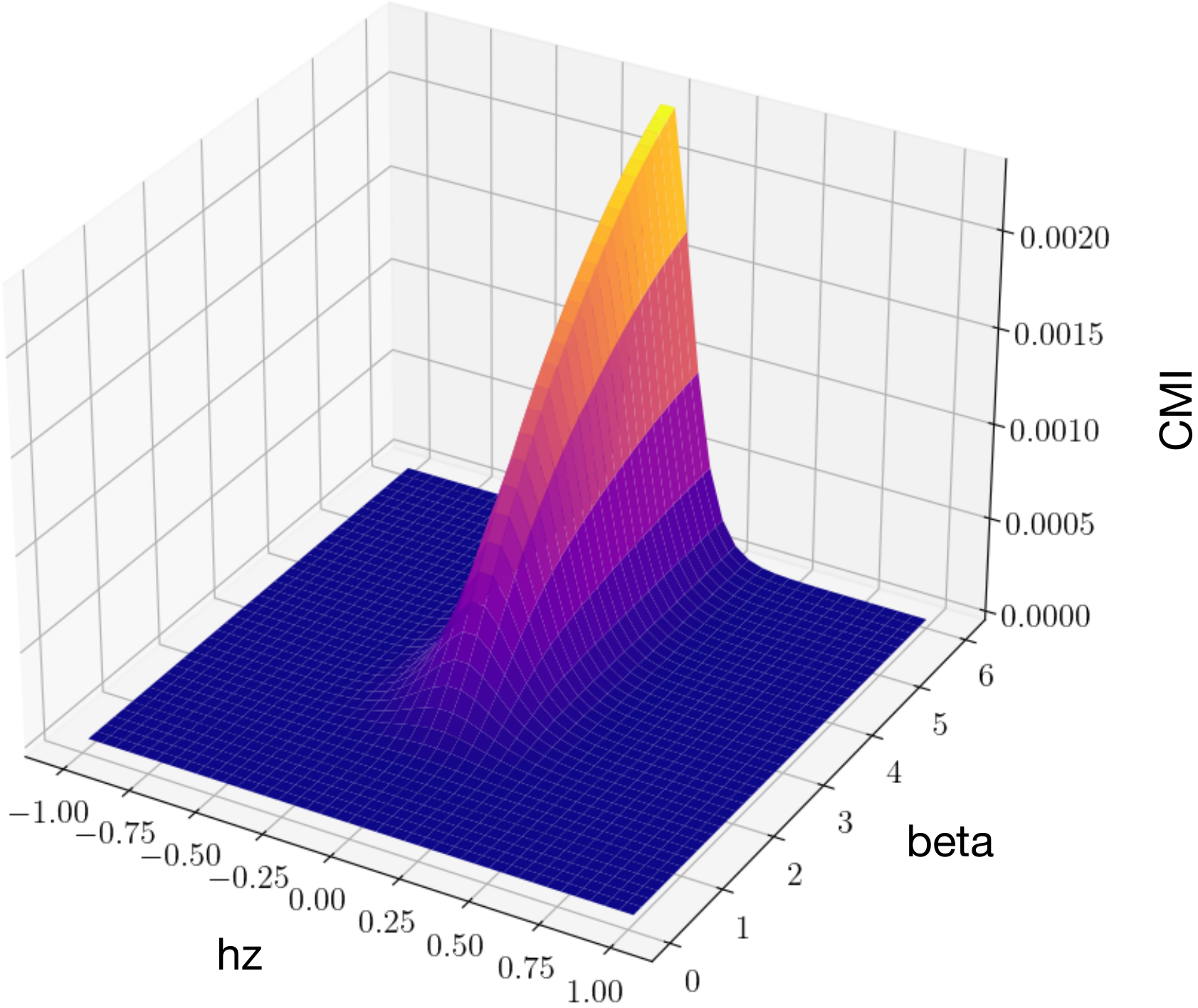}
    \caption{Conditional mutual information, non-permuted}
    \label{fig:nonpermcmi}
\end{subfigure}
\begin{subfigure}[b]{0.45\textwidth}
    \centering
    \includegraphics[width=\textwidth]{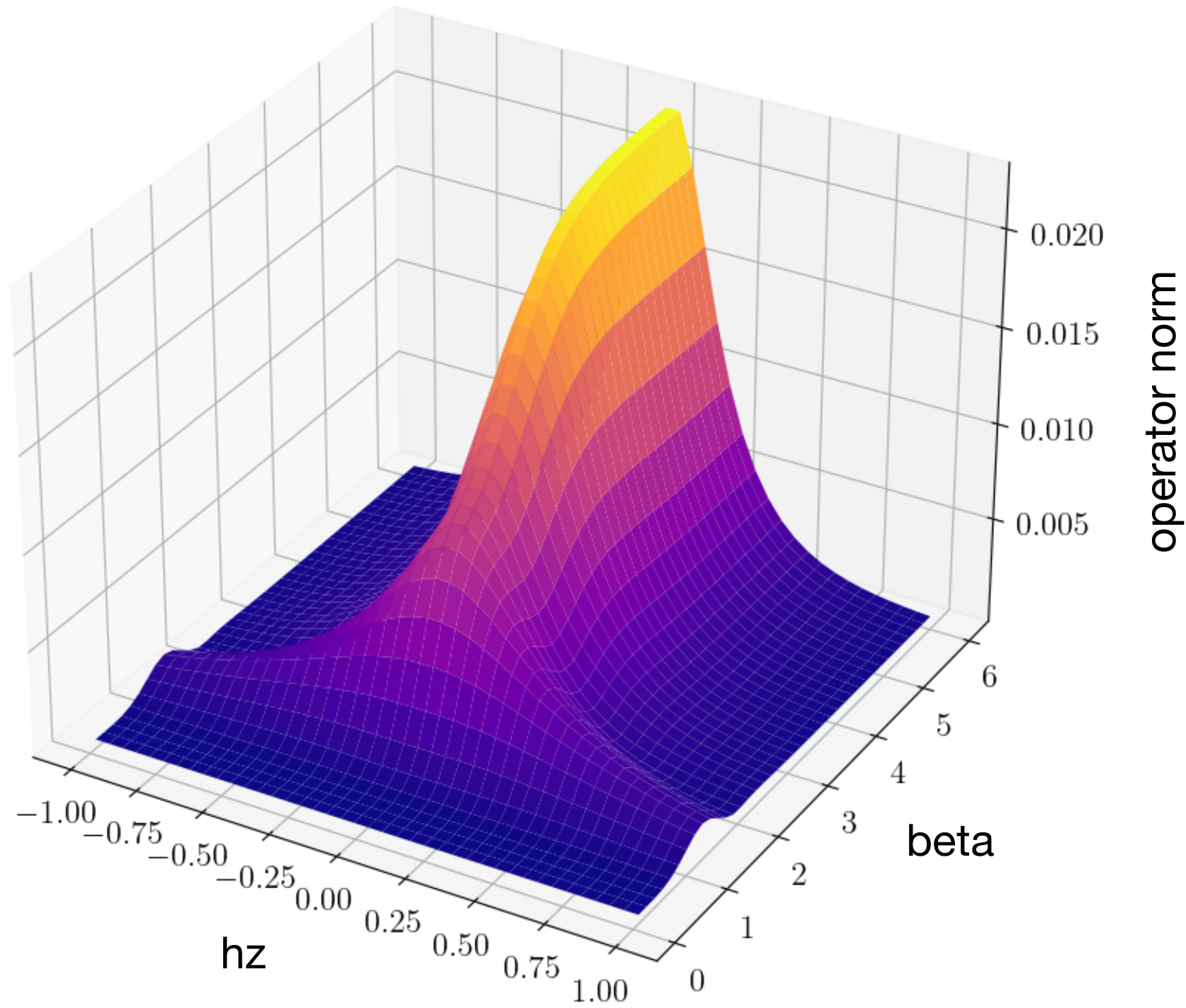}
    \caption{Operator norm difference, permuted}
    \label{fig:permopnom}
\end{subfigure} %
\begin{subfigure}[b]{0.45\textwidth}
    \centering
    \includegraphics[width=\textwidth]{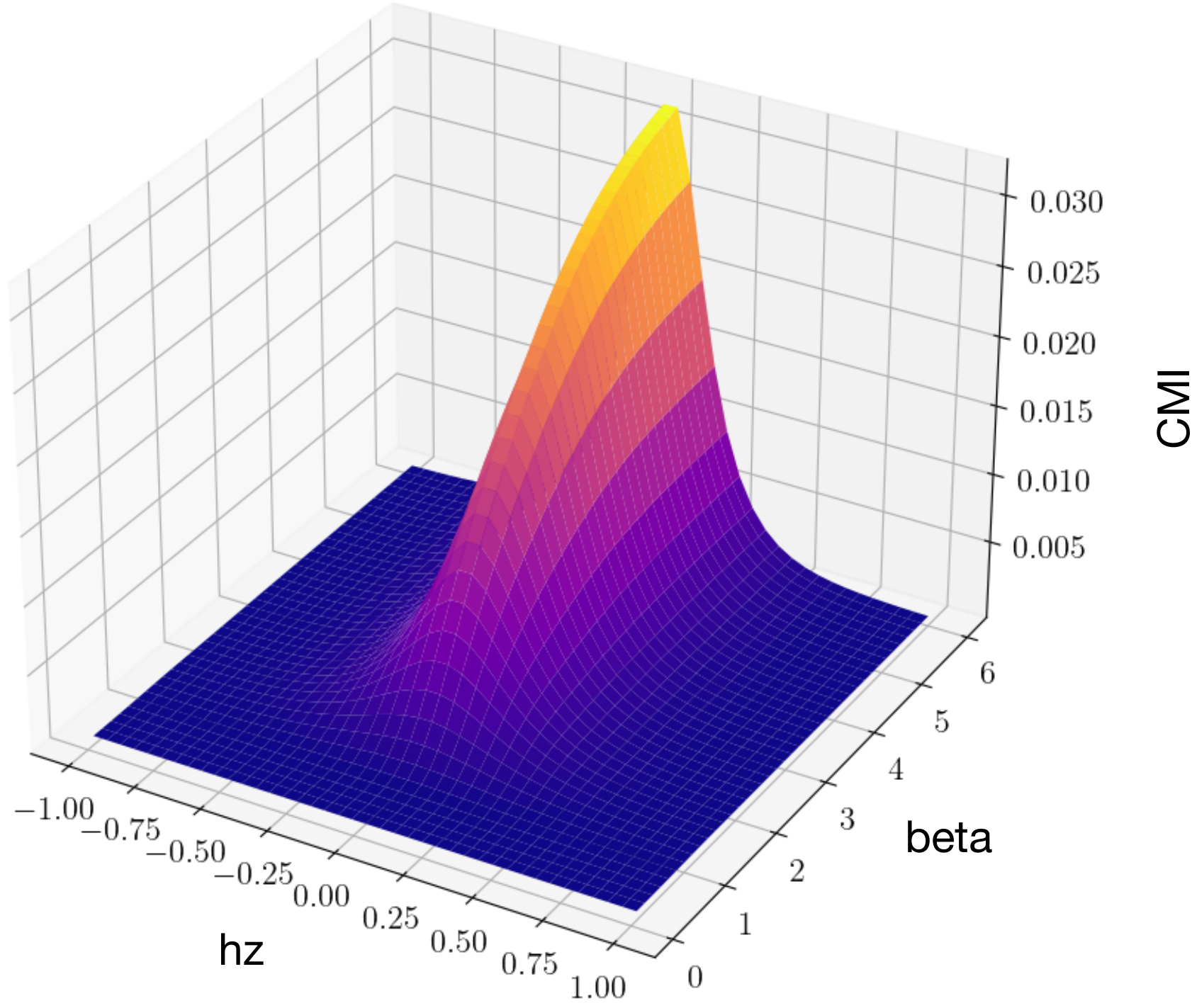}
    \caption{Conditional mutual information, permuted}
    \label{fig:permcmi}
\end{subfigure}%
\caption{Dependence of reconstruction on $\beta$ and $h_z$ for both permuted and non-permuted cases. The plots in Figure \ref{fig:2dplots} may be obtained as vertical slices (fixed $h_z$ of the 3d plots here.}
\label{fig:3dplots}
\end{figure*}

By contrast, compare the chaotic case. The CMI is much lower in the non-permuted case compared to the permuted case. Reconstruction is thus better in the non-permuted case. The exact form of the CMI structurally appears a bit different at intermediate temperatures, but at infinite and zero temperatures, we observe a saturation. Consequently, the reconstruction as well, and we observe an extra peak in the non-permuted case. The important point to note is that in the chaotic phase, the reconstruction is worse at some finite critical temperature - above and below that temperature, it is easier to reconstruct. 

Let's now compare the integrable and chaotic phases. The first observation is that at finite temperature, Petz recovery is better in the chaotic phase than it is in the integrable phase. The recovery for both is perfect at infinite temperature ($\beta=0$). Figure \ref{fig:3dplots} displays how well the state is recovered as $h_z$ is varied in the Hamiltonian. Recovery becomes harder as temperature is decreased but it always increases in the integrable case until it saturates - in the chaotic case it increases, peaks, decreases and then saturates.

\subsection{Thermal reconstruction with perturbative random band matrices}
\label{thermalrecons}

In the previous subsection, we did numerical reconstruction in a spin chain model and checked how well the process succeed. We would like to understand reconstruction analytically as well. To do so, we will work with the following random matrix inspired model of a spin chain Hamiltonian
\begin{equation}
    H_{mn} = O \delta_{mn} + DR_{mn},
\end{equation}
where $O$ and $D$ are constants and $R_{mn}$ are real Gaussian variables with mean zero and variance one, $R_{mn}$ is non-zero for $|m-n| \leq 1$, in other words it is a tridiagonal matrix. One can interpret the second term as a perturbation of a random hopping Hamiltonian.
The thermal density matrix $\rho= \frac{e^{-\beta H}}{Z}$ with the normalization constant $Z=\Tr(e^{-\beta H})$ can be expressed as 
\begin{equation}
    \rho = \frac{1}{Z}e^{- \beta O}\mathbb{I} \times e^{-\beta D R}.
\end{equation}
We will now analyze the system perturbatively i.e. for small $D$. As the computations are lengthy we quote the results here and defer the details to the Appendix. We find that the conditional mutual information for the random band matrix is given by 
\begin{align}
    I&(A:C|B)_{RB} = \frac{\beta^2 D^2}{2d^2} d_B \Tr(R_B^2) + \frac{\beta^2 D^2}{2d^2}d_B d \Tr R^2 \nonumber \\
    & - \frac{\beta^2 D^2}{2d^2} d_{AB}\Tr(R_{AB}^2) - \frac{\beta^2 D^2}{2d^2} d_{BC}\Tr(R_{BC}^2).
\end{align}

We now proceed to find the fidelity between the thermal and reconstructed states by computing the Petz map at leading order. In this limit, 
and we find that the fidelity is lower bounded by 
\begin{equation}
   2^{-\frac{1}{2}I(A:C|B)_{RB}}  \leq F(\rho_{ABC}, \mathcal{R}_B(\rho_{BC})).
\end{equation}

\section{Discussion}\label{sec:discussion}
We have established that in a particular case, when the Markovianity assumption is satisfied in chaotic quantum many-body systems, then one can successfully reconstruct the global state from its marginals. Our work can be considered as a procedure that given the promise of Markovianity, the likelihood of efficient reconstruction is guaranteed. 

Our work holds potential promise for practical tomography: an experimentalist could envisage restricting measurements to subsystems and then from subsystem data, reconstruct a global thermal state. It would be interesting to connect this to current efforts in tomography \cite{Gross_2010, Flammia_2012, Eisert_2020, Huang_2020, Huang_2021}. 

Another direction which we hope to return to in the future is an analysis of the output state of a random quantum circuit with mid-circuit measurements - see \cite{Potter_2022, Fisher_2023, skinner2023lecturenotesintroductionrandom} and references therein. 

In this work we analyzed the recovery of the temperature dependence of recovery - it would be interesting to conduct an investigation on the dependence of recovery on measurement strengths. Perhaps it might also be possible to make a connection of identifying Hamiltonians from single eigenstates \cite{Qi_2019,Garrison_2018}.

One might envisage applying our techniques to quantum chaotic models like the SYK \cite{Maldacena_2016,kitaev2015talks} The immediate difficulty is that these models in general have no meaningful notion of locality or subsystems - additionally, the Markovianity assumption was extremely important for the functioning of the Petz recovery map. Perhaps SYK chains and other inspired models \cite{Chowdhury_2022} may hold promise in this respect and be amenable to tomographic analysis by means of the Petz recovery techniques we have explored in this paper. 

Another interesting direction is to consider the possible impact on reconstruction of thermal CFT states \cite{Rodriguez_Gomez_2021, Datta_2019}. There has been some work on Petz recovery for ground states of lattice systems in terms of CFT \cite{vardhan2023Petzrecoverysubsystemsconformal} and various usages of the Petz map in the context of holography \cite{Cotler_2019, Chen_2020, penington2020replicawormholesblackhole}. However accessing properties of thermal states is much more difficult. Perhaps it may be possible to combine our techniques with either the conformal bootstrap  \cite{Poland_2019} or operator algebraic approaches \cite{Buchholz_2006} to investigate this problem.  It is possible to recast the marginal problem as an SDP - however, one gets guarantees using the Petz map. The SDP problem is to find $\sigma_{XYZ}$ subject to: $tr_Z(\sigma_{XYZ})=\rho_{XY}$, $tr_X(\sigma_{XYZ})=\rho_{YZ}$, $tr_Y(\sigma_{XYZ})=\rho_{XZ}$, $\sigma_{XYZ}\geq 0, \quad tr(\sigma_{XYZ})=1$. This may help with bootstrap investigations of the problem. With respect to holography, it is possible that the critical temperature we have identified after which recovery becomes easier at higher temperatures might be related to connected/disconnect entanglement wedge transitions in holography or page curve and other transitions in black holes \cite{PhysRevLett.44.301,stephens2001notesblackholephase}. 

Another interesting direction is to make modifications of the Petz recovery map. Currently our reconstruction works by looking at the full thermal state. However, Wigner-Dyson statistics is observed only within a single superselection sector. It would be interesting to see if one could reconstruct global states from subsystem information within superselection sectors. There are significant difficulties with this as it is not clear how to define a local subsystem within a superselection sector generally. However, in the special case that the superselection sectors corresponds to eigenspaces of a unitary rearrangement of the tensor product structure, one can construct the Gibbs state within that super-selection sector.\footnote{We would like to thank David Gross for pointing this out to us.} For example, in the case of the spin chain model we considered, the parity symmetry decomposes the Hilbert space into wavefunctions that are symmetric and anti-symmetric under reflection. One can construct a Gibbs state using the symmetric subsector and then use our permuted configuration to perform the decomposition into subsystems. The decompositions of the symmetric wavefunction within the permuted configuration will respect the tensor product structure in real space, thus allowing us to meaningully talk about subsystems in a non-local basis in real space. More generally such a program is ambitious and will presumably require the incorporation of new formal techniques like the entanglement bootstrap \cite{Shi_2021, Shi_2020}.

Finally, our work suggests that there may be numerical algorithms which could be constructed for a subclass of chaotic many-body systems, for which thermal physical quantities may be computed solely from the marginals - such efforts could assist in simplifying numerical quantum many-body physics ranging from high-energy physics (holographic CFTs and particle physics) to low-energy physics (quantum chemistry and condensed matter).


\subsection*{Acknowledgments}
We would like to thank Ari Chakraborty, Vatsal Dwivedi, David Gross, Alex Maloney, Philipp Schmoll and Greg White for extremely useful conversations and discussions. 
SQ acknowledges funding from the Einstein
Research Unit on Quantum Devices, Berlin Quantum, the
DFG (CRC 183, FOR 2724), and the
European Research Council (DebuQC).

\appendix

We analytically compute the CMI and bound the fidelity recovery via the Petz map for the spin chain Hamiltonian
\begin{equation}
    H_{mn} = O \delta_{mn} + DR_{mn},
\end{equation}
where $O$ and $D$ are constants and $R_{mn}$ are real Gaussian variables with mean zero and variance one, $R_{mn}$ is non-zero for $|m-n| \leq 1$. 
The thermal density matrix $\rho= \frac{e^{-\beta H}}{Z}$ with the normalization constant $Z=\Tr(e^{-\beta H})$ can be expressed as 
\begin{equation}
    \rho = \frac{1}{Z}e^{- \beta O}\mathbb{I} \times e^{-\beta D R}.
\end{equation}
We will now analyze the system perturbatively i.e. for small $D$ in which case the thermal density matrix can be written as 
\begin{equation}
    \rho = \frac{1}{Z}e^{-\beta O} \left( 1 - \beta D R + \frac{(\beta D R)^2}{2} + \dots \right).
\end{equation}
For $D=0$, we obtain the maximally mixed state 
\begin{equation}
    \rho_0 = \frac{e^{-\beta O \mathbb{I}}}{\Tr(e^{-\beta O \mathbb{I}})} = \frac{e^{-\beta O}\mathbb{I}}{e^{-\beta O}\Tr(\mathbb{I})} = \frac{\mathbb{I}}{d}.
\end{equation}
For the maximally mixed state, the CMI will be zero and we will obtain perfect Petz recovery. We can tripartition the system into three parts $\rho_{ABC}= \frac{\mathbb{I}_{ABC}}{d}$ where the dimensions are given by $d_i = \dim \mathcal{H}_i$ for $i \in \{A,B,C\}$ where $d=d_A d_B d_C$ - the reduced density matrices are given by $\rho_i = \frac{\mathbb{I}_i}{d_i}$. The entropies are respectively given by $S(R_i)=\log(d_i)$ for the region $R$, for example $S(AB)=\log(d_Ad_B)$ while the total entropy $S(ABC)=\log(d)$, a short computation shows that 
\begin{equation}
    I(A:C|B) = 0,
\end{equation}
for the maximally mixed state. 

We will now expand the Von Neumann entropy for a perturbation of the form 
\begin{equation}
\rho(D) = \rho_0 + D\rho_1 + \mathcal{O}(D^2),
\end{equation}
where $\rho_0$ is the maximally mixed state and $\rho_1 = -\frac{\beta}{d}(R-\frac{\Tr(R)}{d}1)$. 

We now use the fact that for a density matrix $\rho=\frac{I}{d}+\epsilon \rho_1$ where $\rho_1=\rho_1^\dagger$ and $\Tr(\rho_1)=0$, then for $\norm{\epsilon d \rho_1} <1$, $S(\rho)=\log d - \frac{\epsilon^2}{2}d\Tr(\rho_1^2) + O(\epsilon^3)$. This can be shown using the starting point $\rho=\frac{I}{d}+ X$ where $X:=\epsilon d \rho_1$. We can expand the logarithm $\log(1+X)=X-\frac{1}{2}X^2 + O(\norm{X}^3)$ where $\norm{X}<1$. We have that 
\begin{align}
    \log(\rho) = \log \left(\frac{1}{d}(I+X)\right)&=-\log d I + X - \frac{1}{2}X^2 \nonumber \\
    &+ O(\norm{X}^3),
\end{align}
where from now on, we'll drop higher order terms. From this we obtain the vVon Neumann entropy as 
\begin{align}
    S(\rho) &= -\Tr \left[\frac{1}{d}(I+X)(-\log d + X - \frac{1}{2}X^2) \right], \\
    &= \log d - \frac{\epsilon^2}{2}d\Tr(\rho_1^2).
\end{align}
We now take the dimensions of the random band matrix to be $d_A, d_B, d_C$ and $d=d_A d_B d_C$. For $\rho=\frac{I}{d} + D \rho_1$,
\begin{equation}
    S(\rho_X)=\log d_x - \frac{D^2}{2}d_x \Tr(\rho_{1,X}^2),
\end{equation}
where $X \subseteq \{A,B,C\}$ and $\rho_{1,X}=\Tr_{X^c} \rho_1$. The CMI in this case reduces to 
\begin{align}
    I(A:C|B) = & \frac{D^2}{2}  d_B \Tr(\rho_{1,B}^2) + \frac{D^2}{2}d \Tr(\rho_{1,ABC}^2) \nonumber \\
    &- \frac{D^2}{2} d_{AB}\Tr(\rho_{1,AB}^2) - \frac{D^2}{2} d_B \Tr(\rho_{1,BC}^2).
\end{align}
Given our Hamiltonian $H=OI + DR$, and $\rho_1= \frac{-\beta}{d}(R-\frac{\Tr (R)}{d})$. So for any subset $X$, 
\begin{equation}
    \Tr(\rho_{1,X}^2)=\frac{\beta^2}{d^2}\left(\Tr(R_X^2)-\frac{(\Tr R_X)^2}{d_X} \right),
\end{equation}
where $R_X :=\Tr_{X^c} R$ and $\Tr R_X = \Tr R$ for averages. Which gives the CMI  
\begin{align}
    I&(A:C|B)_{RB} = \frac{\beta^2 D^2}{2d^2} d_B \Tr(R_B^2) + \frac{\beta^2 D^2}{2d^2}d_B d \Tr R^2 \nonumber \\
    & - \frac{\beta^2 D^2}{2d^2} d_{AB}\Tr(R_{AB}^2) - \frac{\beta^2 D^2}{2d^2} d_{BC}\Tr(R_{BC}^2).
\end{align}
This immediately gives us a lower bound on the fidelity between the recovered state and the original state. 

\bibliographystyle{unsrt}
\bibliography{paper}{}

\end{document}